\newtheorem{Theo}{Theorem}
\begin{document}

\title{5G PRS-Based Sensing: A Sensing Reference Signal Approach for Joint Sensing and Communication System}

\author{Zhiqing Wei, Yuan Wang, Liang Ma, Shaoshi Yang, Zhiyong Feng, \\
Chengkang Pan, Qixun Zhang, Yajuan Wang,
 Huici Wu, Ping Zhang

\thanks{Zhiqing Wei, Yuan Wang, Shaoshi Yang, Zhiyong Feng, Qixun Zhang, Huici Wu and Ping Zhang are with Beijing University of Posts and Telecommunications, Beijing, China 100876 (email: \{weizhiqing, wangyuan, shaoshi.yang, fengzy, zhangqixun, dailywu, pzhang\}@bupt.edu.cn) and Liang Ma, Chengkang Pan, Yajuan Wang are with Future Research Lab, China Mobile Research Institute, Beijing, China (email: \{maliangyjy, panchengkang, wangyajuan\}@chinamobile.com). \emph{Correspondence author: Zhiqing Wei, Zhiyong Feng.}}}
\maketitle

\begin{abstract}

The emerging joint sensing and communication (JSC) technology is expected to support new applications and services, such as autonomous driving and extended reality (XR), in the future wireless communication systems. Pilot (or reference) signals in wireless communications usually have good passive detection performance, strong anti-noise capability and good auto-correlation characteristics, hence they bear the potential for applying in radar sensing. In this paper, we investigate how to apply the positioning reference signal (PRS) of the 5th generation (5G) mobile communications in radar sensing. This approach has the unique benefit of compatibility with the most advanced mobile communication system available so far. Thus, the PRS can be regarded as a sensing reference signal to simultaneously realize the functions of radar sensing, communication and positioning in a convenient manner. Firstly, we propose a PRS based radar sensing scheme and analyze its range and velocity estimation performance, based on which we propose a method that improves the accuracy of velocity estimation by using multiple frames. Furthermore, the Cr\'{a}mer-Rao lower bound (CRLB) of the range and velocity estimation for PRS based radar sensing and the CRLB of the range estimation for PRS based positioning are derived. Our analysis and simulation results demonstrate the feasibility and superiority of PRS over other pilot signals in radar sensing. Finally, some suggestions for the future 5G-Advanced and 6th generation (6G) frame structure design containing the sensing reference signal are derived based on our study. \\
\end{abstract}
\begin{keywords}
Joint Sensing and Communication, Integrated Sensing and Communication, 5G New Radio, 6G, Positioning Reference Signal, Sensing Reference Signal, Cr\'{a}mer-Rao lower bound
\end{keywords}
\IEEEpeerreviewmaketitle

\IEEEpeerreviewmaketitle

\section{Introduction}
The 5th generation advanced (5G-A) and 6th generation (6G) mobile communication systems are expected to support novel services such as autonomous driving, extended reality (XR), and so forth \cite{TR:CR0}, which will require powerful communication and sensing capabilities simultaneously. Wireless sensing, including positioning, velocity detection and imaging, has long been an independent technology developed in parallel with mobile communications. In 5G-A and 6G mobile communication systems, millimeter wave (mmWave), terahertz (THz) and massive multi-input multi-output (MIMO) technologies could be indispensable. As a result, the frequency bands and antennas of wireless communication systems are becoming similar to those of radar, which makes the joint sensing and communication (JSC) technology feasible and promising \cite{TR:CR1}. In JSC, sensing and communication functions will be mutually beneficial in the same system, which can improve the spectral and energy efficiency while reducing the hardware cost. The application of JSC technology in future mobile networks has already become a consensus \cite{TR:CR}. For example, the ITU IMT--2030 \cite{TR:CR2} has identified JSC as one of the candidate enabling technologies of 6G.

Waveform design is fundamental to the JSC technology. It imposes a strong impact on the performance of sensing and communication. The existing schemes can be divided into sensing centric waveforms and communication centric waveforms \cite{TR:CR3}. The sensing centric designs are based on the waveforms of radar, such as the linear frequency modulated (LFM) signal \cite{TR:CR4} and the phase coded signal \cite{TR:CR7,TR:CR8}. However, the transmission rate of them is low. By contrast, the communication centric designs are based on the waveforms of wireless communications. So far, the orthogonal frequency division multiplexing (OFDM) has been widely applied in the design of communication centric JSC waveforms. For example, Sturm $et\;al.$ \cite{TR:CR9,TR:CR10} first proposed a low-complexity signal processing algorithm to independently recover the range and Doppler information in an OFDM based radar system. Furthermore, according to the Shannon information theory \cite{TR:CR28}, typically radar sensing needs structured waveforms with strong auto-correlation characteristics, while wireless communication requires random waveforms for maximizing the data rate. The pilot signals in wireless communication systems usually have good passive detection performance, strong anti-noise capability and good auto-correlation characteristics. Therefore, pilot signals with high auto-correlation have a great potential to be used in radar sensing, hense the JSC waveform design based on pilot signals has attracted much attention. There are various pilots designed for different purposes in the 5G standard, such as \textit{synchronization signal (SS)}, \textit{demodulation reference signal (DMRS)}, \textit{channel state information-reference signal (CSI-RS)}, \textit{positioning reference signal (PRS)}, and so on. Pilot based JSC waveforms have the following advantages: 1) In a JSC system, pilots can be used for radar sensing with low hardware cost and short implementation cycle; 2) The communication function and the pilot-based sensing function share resources in time domain or frequency domain, which can minimize the interference between radar and communication signals \cite{TR:CR17}; 3) The radar signal processing only needs to concentrate on the pilots, which reduces the computational complexity \cite{TR:CR13}.

Pilot based JSC waveform design was mainly studied from the perspectives of pilot sequence selection \cite{TR:CR11,TR:CR12}, pilot insertion mode \cite{TR:CR13,TR:CR14,TR:CR15}, and time-frequency resource allocation \cite{TR:CR13,TR:CR16,TR:CR17}. In terms of pilot sequence selection, Kumari $et\;al.$ \cite{TR:CR11} applied the preamble signal of the 802.11ad in radar sensing due to its good auto-correlation property. The Zadoff-Chu sequence with good auto-correlation was applied as the comb pilots of channel estimation to strike a balance between the communication error rate and the radar sensing performance in high mobility scenarios \cite{TR:CR12}. In terms of pilot insertion mode, Ozkaptan $et\;al.$ \cite{TR:CR13} inserted the Barker code in a comb pilot, which was used in radar sensing and channel estimation to achieve high range and velocity resolution. The hardware-level verification in the 76-81 GHz frequency band for the design of \cite{TR:CR13} was provided in \cite{TR:CR14}. A JSC waveform using superimposed pilot and orthogonal space-time block code (OSTBC) precoding was proposed in \cite{TR:CR15}, which is able to detect the target with large Doppler frequency shift. In terms of time-frequency resource allocation, \cite{TR:CR13} optimized subcarrier allocation to maximize the performance of radar sensing and communication under the constraints of radar estimation accuracy and effective channel capacity. The pilot based OFDM waveform is capable of realizing short-range, medium-range and long-range radars based on the flexible allocation of pilot subcarriers in \cite{TR:CR16}. A full-duplex communication and radar sensing system was designed in \cite{TR:CR17} using a frequency-division duplex scheme, where the data rate of communication was maximized by optimizing the power allocation of pilot signal.

However, the location of pilot subcarriers in \cite{TR:CR13} are rearranged in a stepped manner, and the superimposed pilot used in \cite{TR:CR15} introduces interference to the data by directly adding the pilot sequence to the data sequence, which does not meet the communication standard. In addition, the pilot in \cite{TR:CR11} is based on IEEE 802.11ad WLAN standard, while the OFDM system used in \cite{TR:CR13,TR:CR14,TR:CR16} is at 79 GHz carrier frequency and MHz large subcarrier frequency spacing to achieve high radar resolution, which makes them incompatible with 5G New Radio (NR) standard. In fact, there is a scarcity of work that directly investigate the use of the modern mobile communication signals of 5G and 5G-A in radar sensing \cite{TR:CR19}. The ITU-R WP5D Document \cite{TR:CR20} indicates that IMT-2030 and beyond will consider integrated communication, positioning and sensing, and potentially even jointly design flexible signals for concurrent communication, positioning and sensing with slight or no modification to hardware and waveform. Meanwhile, the sensing based positioning techniques can be further combined with the traditional reference signal based positioning techniques to improve the accuracy.

In this paper, considering that PRS is a signal specially designed for 5G in-band wireless positioning, we accomplish radar sensing using 5G PRS. Thus, PRS can be regarded as a \textit{sensing reference signal} to simultaneously realize the functions of radar sensing and positioning. PRS enjoys the advantages of long sequence, good auto-correlation, rich time-frequency resources and flexible configuration, which is more suitable for radar sensing compared with other pilot signals. The application of PRS in radar sensing does not require additional modification of communication signals, thus it is well compatible with 5G mobile communication systems, which is beneficial for promoting the integration of JSC technology in the coming 5G-A mobile communication system\footnote{5G-Advanced will start from 3GPP Release 18 (R18), which is expected to be frozen by the end of 2023. Hence, the sensing reference signal based JSC signal design will promote the rapid implementation of JSC technology.}. As shown in Fig. \ref{fig_JRCN}, PRS is sent in the downlink to detect the target. The difficulty of engineering application may be the radar processing at the base station (BS). Due to the compatibility between the PRS based sensing and 5G communication system, there will be few modifications to the hardware. In 2021, Huawei completed the world's first verification of 5G-A JSC technology \cite{TR:CR35}. The test results show that the detection range of JSC base station exceeds 500 meters, which will provide an important reference for the integration of sensing capability at the base station. The main contributions of this paper are as follows.
\begin{figure}[!t]
\centering
\includegraphics[width=0.7\textwidth]{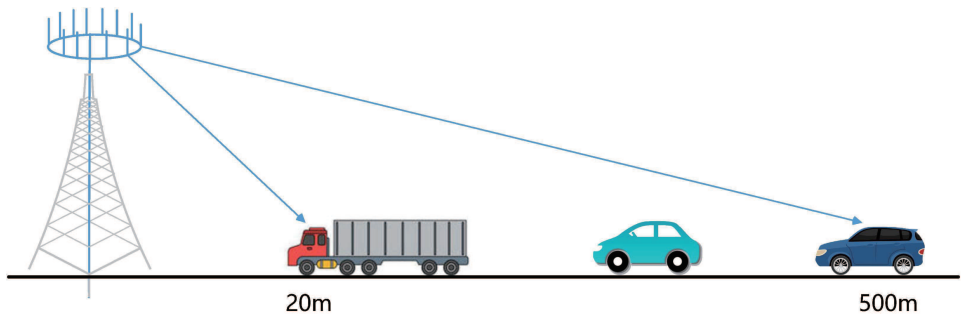}
\caption{JSC based roadside base station detecting vehicles} \label{fig_JRCN}
\end{figure}

1) We consider the downlink reference signals commonly used in 5G NR, namely SS, DMRS and CSI-RS, as comparisons to demonstrate the feasibility of PRS for radar sensing. Then the PRS based OFDM signal is applied as a JSC waveform, which can be regarded as the sensing reference signal. Our radar signal processing is focused on the pilot subcarriers to realize the range and velocity estimation, thus there is no need to make any changes to the 5G signal.

2) The Cr\'{a}mer-Rao lower bound (CRLB) of the range and velocity estimation concerning the PRS based sensing is derived, and the method of increasing the number of Fourier transform points \cite{TR:CR21} is applied to improve the accuracy of range and velocity estimation. In addition, an approach of velocity measurement using multiple frames is proposed to improve the accuracy of velocity estimation and reduce the time slot overhead within a frame.

3) The CRLB of PRS based positioning is derived. Positioning is the main sensing service provided by 5G mobile communication system. The multi-functional integration of sensing, communication, and positioning is realized with the aid of PRS.

4) Based on the above analysis, we give suggestions to the frame structure design in 5G-A and 6G. The sensing reference signal is configured in the frame to perform sensing tasks. The flexible deployment of sensing and communication overhead to satisfy the requirements of different scenarios, and the tradeoff between range and velocity estimation in radar sensing, are also discussed.

The rest of this paper is organized as follows. Section II introduces the PRS of 5G and evaluates its feasibility in radar sensing. Section III describes the system model of PRS based sensing. Section IV presents our proposed radar signal processing method based on PRS. The CRLBs of radar sensing and positioning relying on PRS are derived in Section V. Simulation results and discussions are presented in Section VI. The conclusions are drawn in Section VII. For convenience, the key parameters involved in this paper are listed in Table I.

\begin{table}[!t]
 \caption{\label{sys_para}Key Parameters}
 \begin{center}
 \begin{tabular}{l l}
 \hline
 \hline

    {Symbol} & {Description} \\

  \hline

  $\mu$ & Subcarrier spacing configuration  \\
  $\Delta f$ & Frequency spacing of subcarriers  \\
  ${N_{{\rm{symb}}}^{{\rm{slot}}}}$ & Number of symbols per time slot \\
  $N_{{\rm{slot}}}^{{\rm{frame,}}\mu }$ & Number of slots per frame for $\mu$ \\
  ${T_{{\rm{CP}}}}$ & Length of the cyclic prefix \\
  $T$ & Duration of the OFDM symbol without cyclic prefix \\
  $T_s$ & Total duration of OFDM symbol \\
  $M$ & Number of OFDM symbols\\
  $N$ & Number of subcarriers \\
  $J$ & Set of subcarriers carrying PRS \\
  $N_J$ & Number of subcarriers carrying PRS \\
  $K_{{\rm{comb}}}^{{\rm{PRS}}}$ & Comb size of PRS\\
  $f_c$ & Carrier frequency  \\
  $c$ & Speed of light \\
  $\xi $ & Attenuation factor \\
  $\tau$ & Round-trip delay of target\\
  $R_r$ & Range of target\\
  $f_{d,r}$ & Doppler shift of target \\
  $v$ & Velocity of target\\
  $\Delta R$ & Range resolution \\
  ${R_{\max }}$ & Maximum unambiguous range \\
  $m_a$ & Fractional factor \\
  $\Delta v$ & Velocity resolution \\
  ${v_{\max }}$ & Maximum unambiguous velocity \\
  $SNR$ & Signal-to-noise ratio \\

  \hline
  \hline
 \end{tabular}
 \end{center}
\end{table}
\section{5G Positioning Reference Signal}
In this section, we introduce the PRS of 5G NR, and analyze its sequence correlation and time-frequency resource mapping scheme to evaluate its feasibility in radar sensing.
\subsection{PRS Sequence Generation}
The PRS sequence generating equation defined in TS 38.211 \cite{TR:CR22} is as follows.
\begin{equation}
r(m) = \frac{1}{{\sqrt 2 }}(1 - 2c(2m)) + j\frac{1}{{\sqrt 2 }}(1 - 2c(2m + 1)),
\end{equation}
where $j$ is the imaginary unit, and the pseudo-random sequence $c$($i$) adopts the 31st-order Gold sequence. The generating equation of the initial sequence of $c$($i$) in \cite{TR:CR22} is
\begin{equation}
\begin{array}{*{20}{l}}
{{c_{{\rm{init}}}} = ({2^{22}}\left\lfloor {\frac{{n_{{\rm{ID}},{\rm{seq}}}^{{\rm{PRS}}}}}{{1024}}} \right\rfloor  + {2^{10}}(N_{{\rm{symb}}}^{{\rm{slot}}}n_{s,f}^\mu  + l + 1) \cdot (2(n_{{\rm{ID}},{\rm{seq}}}^{{\rm{PRS}}}}\\
{\,\bmod \,1024) + 1) + (n_{{\rm{ID}},{\rm{seq}}}^{{\rm{PRS}}}\,\bmod \,1024))\,\bmod \,{2^{31}},}
\end{array}
\end{equation}
where ${n_{{\rm{ID}},{\rm{seq}}}^{{\rm{PRS}}}} \in \{ 0,1, \ldots ,4095\}$ is the downlink PRS sequence ID uniquely identifying a downlink PRS resource, ${N_{{\rm{symb}}}^{{\rm{slot}}}}$ is the number of symbols per time slot, $n_{s,f}^\mu  \in \{ 0, \ldots ,N_{{\rm{slot}}}^{{\rm{frame,}}\mu } - 1\}$ is the number of time slots within a frame, assuming the subcarrier spacing configuration $\mu$ and the number of time slots per frame $N_{{\rm{slot}}}^{{\rm{frame,}}\mu }$, and $l$ is the number of OFDM symbols in the time slot to which the sequence is mapped.

Since PRS is generated by the Gold sequence and the Gold sequence has good auto-correlation and cross-correlation characteristics, as shown in Fig. \ref{fig_PRS correlation characteristics}, we infer that PRS is suitable for radar sensing.
\begin{figure}[!t]
\centering
\includegraphics[width=0.6\textwidth]{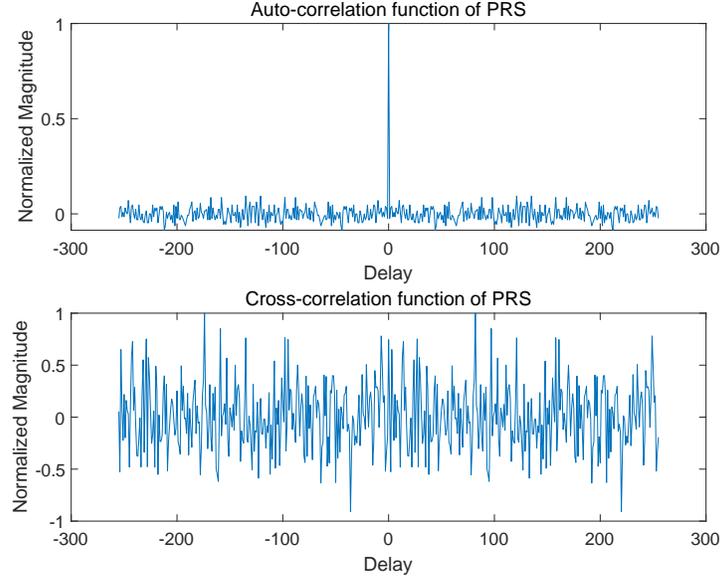}
\caption{PRS correlation characteristics} \label{fig_PRS correlation characteristics}
\end{figure}
\subsection{Time-Frequency Resource Mapping for PRS}
Based on the fact that a physical resource block (PRB) is defined
as 12 continuous subcarriers in frequency domain, the PRBs allocated to PRS in TS 38.214 \cite{TR:CR23} have a granularity of 4 PRBs, a minimum of 24 PRBs and a maximum of 272 PRBs. Table II shows the flexible transmission parameters supported in 5G NR. It is worth mentioning that $\Delta f$ is the frequency spacing of subcarriers, ${T_{{\rm{CP}}}}$ denotes the length of the cyclic prefix (CP), and $T$ denotes the duration of an OFDM symbol without CP. The 3GPP standard divides the frequencies available to 5G into the Frequency Range 1 (FR1, 450 MHz - 5.9 GHz) band and the Frequency Range 2 (FR2, 24.2 GHz - 52.6 GHz) band.
\begin{table}[htbp]
    \renewcommand\arraystretch{1.2}
    \caption{Flexible transmission parameters supported by 5G NR \textsuperscript{\cite{TR:CR22,TR:CR24}}}
    \begin{center}
        \begin{tabular}{|c|c|c|c|c|c|}
        \hline
        $\mu $&0&1&2&3&4\\
        \hline
        $\Delta f = {2^\mu } \cdot 15\left[ {{\rm{kHz}}} \right]$&15&30&60&120&240\\
        \hline
        ${N_{{\rm{symb}}}^{{\rm{slot}}}}$&14&14&14&14&14\\
        \hline
        $N_{{\rm{slot}}}^{{\rm{frame,}}\mu }$&10&20&40&80&160\\
        \hline
        $\begin{array}{l}\quad \quad \quad {\rm{FR1}}\\(450{\rm{MHz}} - 5.9{\rm{GHz)}}\end{array}$&$\surd$&$\surd$&$\surd$&$\times$&$\times$\\
        \hline
        $\begin{array}{l}\quad \quad \quad {\rm{FR2}}\\(24.2{\rm{GHz}} - 52.6{\rm{GHz)}}\end{array}$&$\times$&$\times$&$\surd$&$\surd$&$\surd$\\
        \hline
        $T\left( {\mu s} \right)$&66.67&33.33&16.67&8.33&4.17\\
        \hline
        ${T_{{\rm{CP}}}}\left( {\mu s} \right)$&4.69&2.34&1.17&0.57&0.29\\
        \hline
        $T{\rm{ + }}{{T_{{\rm{CP}}}}}\left( {\mu s} \right)$&71.35&35.68&17.84&8.92&4.46\\
        \hline

        \end{tabular}
    \end{center}
\end{table}

In terms of sequence generation and time-frequency resources, PRS has unique advantages. Consider the downlink reference signals commonly used in 5G NR, namely SS, DMRS and CSI-RS, as comparisons, as shown in Table III. Note that the resource element (RE), defined as an OFDM symbol on a single subcarrier, is the smallest unit of time-frequency resource. SS contains primary synchronization signal (PSS) and secondary synchronization signal (SSS). As shown by Table III, the time-frequency resources of the SS and DMRS are limited. The CSI-RS also occupies fewer time-frequency resources than PRS due to the multiplexing of different antenna ports, while the PRS is rich in time-frequency resources. Therefore, the performance of PRS based radar sensing is expected to be better.
\begin{table*}[htbp]
    \renewcommand\arraystretch{1.2}
    \caption{Comparison of PRS with SS, DMRS and CSI-RS \textsuperscript{\cite{TR:CR22,TR:CR23}}}
    \begin{center}
        \begin{tabular}{|c|c|c|c|c|}
        \hline
        &SS&DMRS&CSI-RS&PRS\\
        \hline
        Signal Location&\makecell[l]{SS is located in the \\ Synchronization Broadcast \\ Block (SSB).}&\makecell[l]{DMRS exists in various \\ physical channels, including \\ downlink and uplink physical \\ channels.}&\makecell[l]{CSI-RS can only be sent on \\ downlink symbols, not on \\ the overlapped PRB with \\ SSB.}&\makecell[l]{The signal dedicated to \\ downlink positioning, which \\ cannot be mapped to the \\ resource particles allocated \\ to the SSB.}\\
        \hline
        \makecell[c]{Sequence \\ Generation}&\makecell[l]{PSS is a 127-length M \\ sequence, and SSS is a \\ 127-length Gold sequence.}&Gold sequence&Gold sequence&\makecell[l]{Gold sequence with a length \\ of 4096}\\
        \hline
        \makecell[c]{Time-frequency \\ Resources}&\makecell[l]{PSS and SSS use the first \\ and third symbol in an SSB \\ respectively, occupying 127 \\ subcarriers with the number \\ of sequences from 57 to 183 \\ among 144 REs in the \\ frequency domain.}&\makecell[l]{Front-load DMRS can occupy \\ 1/2 OFDM symbols in the \\ time domain. The DMRS of \\ physical broadcasting channel \\ (PBCH) occupies up to 20 \\ PRBs in the frequency \\ domain.}&\makecell[l]{CSI-RS can support up to \\ 32 different antenna ports, \\ which allows multiplexing \\ in one PRB. The time \\ domain can occupy 1/2/4 \\ OFDM symbols, and the \\ bandwidth can occupy up \\ to 52 PRBs.}&\makecell[l]{The frequency domain has a \\ comb-shaped pilot structure, \\ the bandwidth can occupy a \\ maximum of 272 PRBs, and \\ the time domain can occupy \\ multiple consecutive time \\ slots.}\\
        \hline

        \end{tabular}
    \end{center}
\end{table*}

PRS supports flexible time-frequency resource configuration to satisfy the positioning accuracy requirements in different application scenarios and avoid the waste of resources as well. According to the PRS resource mapping scheme defined in TS 38.211, PRS supports four comb forms, namely Comb 2/4/6/12, in the frequency domain, and four symbol number configurations, i.e., Symbol 2/4/6/12, in the time domain.

PRS supports the time-domain patterns summarized in Table IV with the RE offset $k_{{\rm{offset}}}^{{\rm{PRS}}} = 0$, where NA indicates that the mapping mode is not supported.
\begin{table}[htbp]
    \renewcommand\arraystretch{1.2}
    \caption{Time-domain mapping patterns supported by PRS
    \textsuperscript{\cite{TR:CR22}}}
    \begin{center}
        \begin{tabular}{|c|c|c|c|c|}
        \hline
        &2 symbols&4 symbols&6 symbols&12 symbols\\
        \hline
        Comb 2&0,1&0,1,0,1&0,1,0,1,0,1&\makecell[l]{0,1,0,1,0,1,0,1,0,1,\\0,1}\\
        \hline
        Comb 4&NA&0,2,1,3&NA&\makecell[l]{0,2,1,3,0,2,1,3,0,2,\\1,3}\\
        \hline
        Comb 6&NA&NA&0,3,1,4,2,5&\makecell[l]{0,3,1,4,2,5,0,3,1,4,\\2,5}\\
        \hline
        Comb 12&NA&NA&NA&\makecell[l]{0,6,3,9,1,7,4,10,2,\\8,5,11}\\
        \hline

        \end{tabular}
    \end{center}
\end{table}
Therefore, the two-dimensional time-frequency resource mapping diagrams of PRS in one time slot (14 OFDM symbols) and 12 consecutive subcarriers (one PRB) are obtained, as shown in Fig. \ref{fig_PRS time-frequency two-dimensional resource map}.
\begin{figure*}[htbp]
\centering
\includegraphics[width=0.7\textwidth]{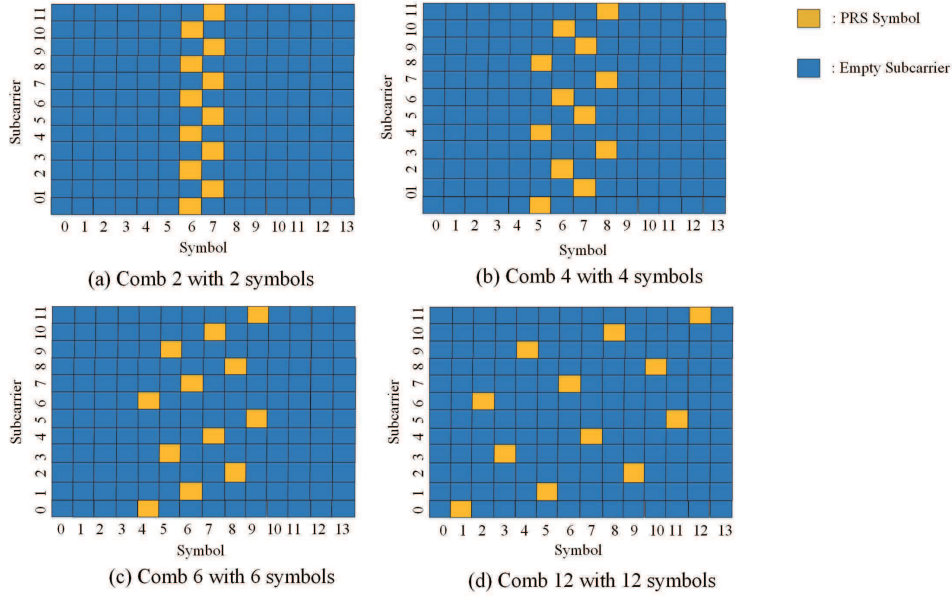}
\caption{The two-dimensional time-frequency resource map ping diagrams for PRS} \label{fig_PRS time-frequency two-dimensional resource map}
\end{figure*}

It can be summarized that the time-frequency resource allocation for PRS is flexible, so that multiple different downlink PRS signals from multiple base stations are multiplexed on different subcarriers in a comb-like manner. Thus, the comb structure of PRS is to control the interference of PRS signals transmitted by multiple BSs. For the radar sensing using PRS, it is also necessary to distinguish signals through different time-frequency structures to reduce interference.

\section{JSC Signal Model}
In this section, the PRS based JSC signal model is proposed. We consider the scenario where the 5G BS sends the downlink PRS to the target, and uses the echo signal from the target for range and velocity estimation. In addition, the ambiguity function is derived to further demonstrate its feasibility in radar sensing.
\subsection{Signal Model of PRS}
Taking Comb 4 with 4 symbols as an example, the PRS mapped OFDM signal model is shown in Fig. \ref{fig_Signal Model}. In a time slot, PRS occupies $M$ OFDM symbols in the time domain, $N$ subcarriers in the frequency domain, and the number of subcarriers carrying PRS is ${N_J}$ ($N/K_{{\rm{comb}}}^{{\rm{PRS}}}$), where $J$ is the set of subcarriers carrying PRS.
\begin{figure}[htbp]
\centering
\includegraphics[width=0.65\textwidth]{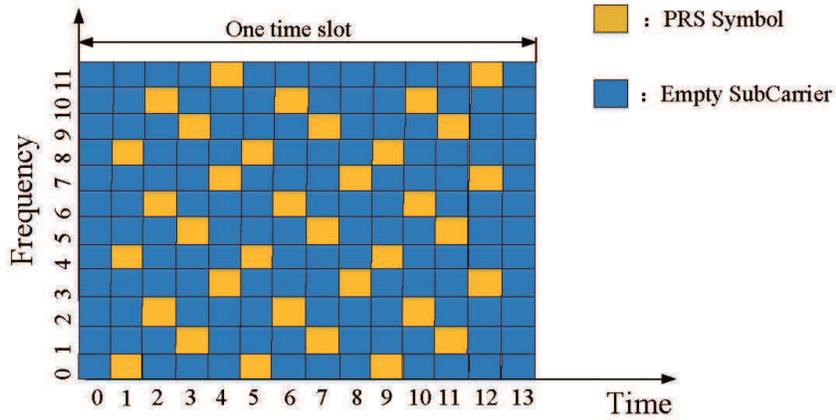}
\caption{PRS mapped OFDM signal model} \label{fig_Signal Model}
\end{figure}

The continuous time domain signal expression is
\begin{equation}
x\left( t \right) = \sum\limits_{m = 0}^{M - 1} {\sum\limits_{k = 0}^{{N_J} - 1} {s(k,m) \times {e^{j2\pi {f_k}t}}} {\rm{rect}}(\frac{{t - m{T_s}}}{{{T_s}}})},
\end{equation}
where $s\left( {k,m} \right)$ represents the modulated PRS symbol, with subcarrier index $k$ and OFDM symbol index $m$. ${T_s}$ is the total duration of the OFDM symbol which satisfies ${T_s}{\rm{ = }}T{\rm{ + }}{{T_{{\rm{CP}}}}}$. $\Delta f = 1/T$ is the frequency spacing of subcarriers, ${f_k}$ is the frequency of the $k$-th subcarriers that carry the PRS symbol. ${\rm{rect(}}t{\rm{/}}{T_s}{\rm{)}}$ is the rectangular function, which is equal to 1, for $0 \le t < {T_s}$, and 0, otherwise.

With the PRS equally spaced in frequency domain, the frequency of the subcarriers carrying PRS symbols ${f_k}$ satisfies the following equation.
\begin{equation}
{f_k} = (K_{{\rm{comb}}}^{{\rm{PRS}}} \times k + {k_0})\Delta f\;\quad k = 0, \ldots {N_J} - 1,
\end{equation}
where $K_{{\rm{comb}}}^{{\rm{PRS}}}$ is the comb size of PRS, ${k_0}$ is the index of the first subcarrier carrying PRS, which is related to $m$ and $K_{{\rm{comb}}}^{{\rm{PRS}}}$.

Taking Comb 4 with 4 symbols as an example, $K_{{\rm{comb}}}^{{\rm{PRS}}}=4$, and ${k_0}$ can take 0, 2, 1, 3 in turn. The relation between $k_0$ and $m$ is shown in (5)
\begin{equation}
{k_0} = \frac{{m\bmod K_{{\rm{comb}}}^{{\rm{PRS}}}}}{2} + \frac{3}{4}\left[ {1 - {{\left( { - 1} \right)}^{m\bmod K_{{\rm{comb}}}^{{\rm{PRS}}}}}} \right],
\end{equation}
where mod refers to a modulo operation.
\subsection{Received Radar Signal}
According to radar signal processing algorithm in \cite{TR:CR10}, the echo signal received by radar receiver is
\begin{equation}
\begin{array}{l}
{{S_{{\rm{Rx}}}}}\left[ {k,m} \right] = \xi {{S_{{\rm{Tx}}}}}\left[ {k,m} \right] \times {e^{-j2\pi \left( {K_{{\rm{comb}}}^{{\rm{PRS}}}k\Delta f} \right)\frac{{2{R_r}}}{c}}}\\
 \times {e^{j2\pi \left( {K_{{\rm{comb}}}^{{\rm{PRS}}}m{T_s}} \right){f_{d,r}}}}
\end{array},
\end{equation}
where ${{S_{{\rm{Tx}}}}}\left[ {k,m} \right]$, ${{S_{{\rm{Rx}}}}}\left[ {k,m} \right]$ are the transmitted and the received modulation symbols respectively, $\xi $ is the attenuation factor, which is regarded as constant during the transmission of PRS. ${R_r}$ is the range of target, ${f_{d,r}}$ is the Doppler shift, and $c$ is the speed of light.

Dividing the received modulation symbols ${{S_{{\rm{Rx}}}}}\left[ {k,m} \right]$ by the transmitted  modulation symbols ${{S_{{\rm{Tx}}}}}\left[ {k,m} \right]$, the following matrix  ${\left( {{S_g}} \right)_{k,m}}$ is obtained.
\begin{equation}
{\left( {{S_g}} \right)_{k,m}} = \frac{{{{S_{{\rm{Rx}}}}}\left[ {k,m} \right]}}{{{{S_{{\rm{Tx}}}}}\left[ {k,m} \right]}} = \xi \left( {{{\bar k}_r} \otimes {{\bar k}_d}} \right),
\end{equation}
where ${\bar k_r}\left( k \right) = {e^{ - j2\pi \left( {K_{{\rm{comb}}}^{{\rm{PRS}}}k\Delta f} \right)\frac{{2{R_r}}}{c}}}$ and ${\bar k_d}\left( m \right) = {e^{j2\pi \left( {K_{{\rm{comb}}}^{{\rm{PRS}}}m{T_s}} \right){f_{d,r}}}}$ are the two vectors carrying the range and the Doppler information. $\otimes$ refers to a dyadic product.

\subsection{Ambiguity Function of PRS}
The definition of ambiguity function is
\begin{equation}
\chi \left( {\tau ,{f_{d,r}}} \right) = \int\limits_{ - \infty }^\infty  {x\left( t \right)} {x^*}\left( {t - \tau } \right){e^{j2\pi {f_{d,r}}t}}dt,
\end{equation}
where ${x^*}( \cdot )$ is a conjugate operation, $\tau$ is the round-trip delay experienced by signal transmission.

Substituting the signal expression of (3) into (8), the following expression can be obtained.
\begin{equation}
\begin{array}{l}
\chi \left( {\tau ,{f_{d,r}}} \right) = \sum\limits_{{m_1} = 0}^{M - 1} {\sum\limits_{{m_2} = 0}^{M - 1} {\sum\limits_{{k_1} = 0}^{{N_J} - 1} {\sum\limits_{{k_2} = 0}^{{N_J} - 1} {s\left( {{k_1},{m_1}} \right){s^*}\left( {{k_2},{m_2}} \right)} } } } \\
 \cdot {e^{j2\pi {f_{{k_2}}}\tau }} \cdot \int\limits_{ - \infty }^{ + \infty } {{e^{j2\pi \left( {{f_{{k_1}}} - {f_{{k_2}}} + {f_{d,r}}} \right)t}}} {\rm{rect}}(\frac{{t - {m_1}{T_s}}}{{{T_s}}})\\
 \times {\rm{rect}}(\frac{{t - \tau  - {m_2}{T_s}}}{{{T_s}}})dt
\end{array}.
\end{equation}

The ambiguity function of PRS can be abbreviated as
\begin{equation}
\begin{array}{l}
\chi \left( {\tau ,{f_{d,r}}} \right) = \sum\limits_{{m_1} = 0}^{M - 1} {\sum\limits_{{m_2} = 0}^{M - 1} {\sum\limits_{{k_1} = 0}^{{N_J} - 1} {\sum\limits_{{k_2} = 0}^{{N_J} - 1} {s\left( {{k_1},{m_1}} \right){s^*}\left( {{k_2},{m_2}} \right)} } } } \\
 \cdot {e^{j2\pi {f_{{k_2}}}\tau }}\int\limits_{{t_{\min }}}^{{t_{\max }}} {{e^{j2\pi \left( {{f_{{k_1}}} - {f_{{k_2}}} + {f_{d,r}}} \right)t}}} dt\\
 = {t_{\max  - \min }}\sum\limits_{{m_1} = 0}^{M - 1} {\sum\limits_{{m_2} = 0}^{M - 1} {\sum\limits_{{k_1} = 0}^{{N_J} - 1} {\sum\limits_{{k_2} = 0}^{{N_J} - 1} {s\left( {{k_1},{m_1}} \right){s^*}\left( {{k_2},{m_2}} \right)} } } } \\
 \cdot {e^{j2\pi {f_{{k_2}}}\tau }} \cdot \rm{sinc}\left( {\left( {{f_{{k_1}}} - {f_{{k_2}}} + {f_{d,r}}} \right){t_{\max  - \min }}} \right)\\
 \cdot {e^{j2\pi \left( {{f_{{k_1}}} - {f_{{k_2}}} + {f_{d,r}}} \right){t_{avg}}}}
\end{array},
\end{equation}
where $\rm{sinc}\left(  \cdot  \right)$ refers to a normalized sinc function, the definitions of ${t_{\max }}$, ${t_{\min }}$ , ${t_{\max  - \min }}$ and ${t_{\rm{avg}}}$ are as follows.
\begin{equation}
\begin{array}{l}
{t_{\max }} = \min \left\{ {\left( {{m_1} + 1} \right){T_s},\tau  + \left( {{m_2} + 1} \right){T_s}} \right\},\\
{t_{\min }} = \max \left\{ {{m_1}{T_s},\tau  + {m_2}{T_s}} \right\},\\
{t_{\max  - \min }} = {t_{\max }} - {t_{\min }},\\
{t_{\rm{avg}}} = \frac{{{t_{\max }} + {t_{\min }}}}{2}.
\end{array}
\end{equation}

It can be discovered from Fig. \ref{fig_mohu} that the ambiguity function of PRS is ``pushpin type" which can provide high range and velocity resolution at the same time. Thus, PRS has the potential to realize radar sensing.
\begin{figure}[!t]
\centering
\includegraphics[width=0.65\textwidth]{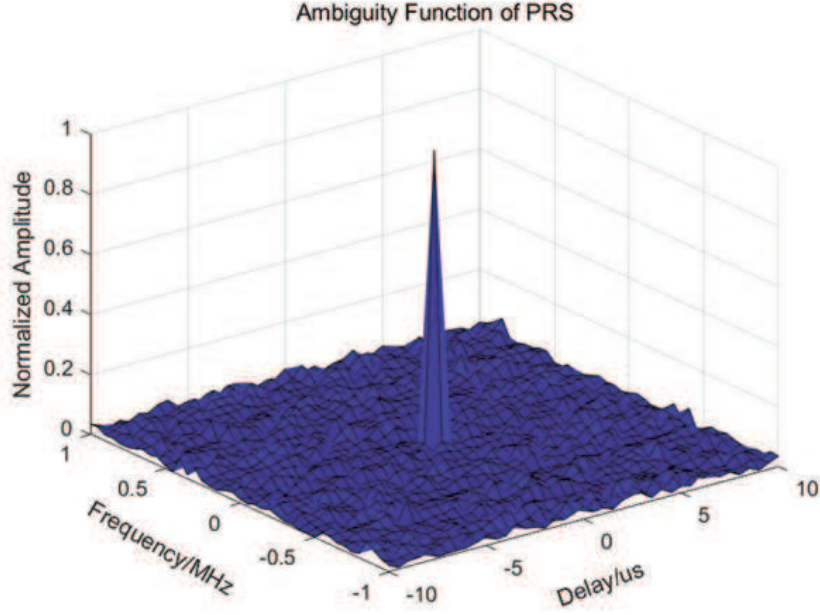}
\caption{Ambiguity function of PRS (Comb 4 with 4 symbols, $N$=1024, ${N_J}{\rm{ = }}256$)} \label{fig_mohu}
\end{figure}

\section{PRS Based Radar Sensing}
In this section, the radar signal processing algorithm in \cite{TR:CR10} is adopted. Then, the resolution, the maximum
unambiguous range/velocity, and the accuracy of PRS based radar sensing are analyzed. Besides, the methods improving the accuracy of radar sensing are proposed.
\subsection{Range Estimation using PRS}
${N_J}$-point inverse fast Fourier transform (IFFT) is performed on the PRS, which is inserted at equal intervals in the frequency domain. IFFT is performed on the $m$-th column of the ${\left( {{S_g}} \right)_{k,m}}$
\begin{equation}
\begin{array}{l}
r\left( l \right) = {\rm{IFFT}}\left( {{S_{g,m}}(k)} \right) = \sum\limits_{k = 0}^{{N_J} - 1} {{e^{ - j2\pi K_{{\rm{comb}}}^{{\rm{PRS}}}\Delta fk\frac{{2{R_r}}}{c}}}}  \times {e^{j2\pi lk/{N_J}}}\\
l \in \left\{ {0,1, \ldots {N_J} - 1} \right\},
\end{array}
\end{equation}
where ${\rm{IFFT}}\left(  \cdot  \right)$ is a inverse fast Fourier transform operation. $l$ is the index of the $l$-th result of $N_J$-point IFFT. The peak occurs when the two exponential terms in (12) cancel each other. With the index of the peak of column $m$ recorded as $in{d_{{S_g},m}}$, the estimated range ${{\hat R}_r}$ is derived as
\begin{equation}
\begin{array}{l}
{{\hat R}_r} = \frac{{c\hat l}}{{2{N_J}K_{{\rm{comb}}}^{{\rm{PRS}}}\Delta f}}\\
\quad \, = \frac{{in{d_{{S_g},m}}c}}{{2N\Delta f}}\quad \;in{d_{{S_g},m}} \in \left\{ {0,1, \ldots {N_J} - 1} \right\}.
\end{array}
\end{equation}

The range resolution $\Delta R$ is derived as
\begin{equation}
\Delta R{\rm{ = }}\frac{c}{{2N\Delta f}}\;.
\end{equation}

Limited by the maximum value of ${in{d_{{S_g},m}}}$, the maximum unambiguous range ${R_{\max }}$ can be expressed as (15)
\begin{equation}
{R_{\max }} = \frac{{{N_J}c}}{{2N\Delta f}}{\rm{ = }}\frac{c}{{2K_{{\rm{comb}}}^{{\rm{PRS}}}\Delta f}}.
\end{equation}

(15) shows that short-range, medium-range, and long-range ranging can be realized by tuning the comb size of PRS ${K_{{\rm{comb}}}^{{\rm{PRS}}}}$. However, the restriction imposed by the guard interval is usually more stringent in actual target detection.

Averaging the estimated results of all columns of ${\left( {{S_g}} \right)_{k,m}}$ yields an estimate of the range $\hat R_r$. With ${N_t}$ measurements, the root mean square error (RMSE) of the range estimation is derived as
\begin{equation}
RMSE\left( {{R_r}} \right){\rm{ = }}\sqrt {\frac{1}{{{N_t}}}\sum\limits_{i = 1}^{N_t} {{{\left( {{{\hat R}_r}\left[ i \right] - {R_r}} \right)}^2}} }.
\end{equation}

According to \cite{TR:CR21}, the method of increasing the number of IFFT points can be adopted to improve the ranging accuracy. By introducing the fractional factor ${m_a}$, the estimation function using fractional Fourier transform (FRFT) is
\begin{equation}
\begin{array}{l}
{\rm{IFFT}}\left( {{S_{g,m}}(k)} \right) = \frac{1}{{{m_a} \times {N_J}}}\sum\limits_{k = 0}^{{m_a} \times {N_J} - 1} {{e^{ - j2\pi K_{{\rm{comb}}}^{{\rm{PRS}}}\Delta fk\frac{{2{R_r}}}{c}}}} \\
 \times {e^{\frac{{j2\pi lk}}{{{m_a} \times {N_J}}}}}
\end{array}.
\end{equation}

Meanwhile, the range estimation in (13) can be rewritten as
\begin{equation}
{\hat R_{r,{m_a}}} = \frac{{ind_{{S_g},m}^{{m_a}}c}}{{2 \times {m_a} \times N\Delta f}}\;\;ind_{{S_g},m}^{{m_a}} \in \left\{ {0,1, \ldots {m_a} \times {N_J} - 1} \right\},
\end{equation}
where $ind_{{S_g},m}^{{m_a}}$ is the index of the peak for IFFT after introducing the fractional factor ${m_a}$.

The fractional factor increases more search steps. Due to the fine accuracy of the estimation curve, the ranging accuracy is improved \cite{TR:CR21}. However, the ranging accuracy will only approach CRLB with infinitely increasing ${m_a}$. Since the complexity of IFFT is $o\left( {{m_a} \times N \cdot {{\log }_2}({m_a}N}) \right)$, the tradeoff between accuracy and complexity when choosing $m_a$ needs to be considered. However, the maximum unambiguous range does not change with the increase of the IFFT points.
\subsection{Velocity Estimation using PRS}
Similar to the range estimation, ${M_J}$ ($M/K_{{\rm{comb}}}^{{\rm{PRS}}}$)-points fast Fourier transform (FFT) is performed on the ${M_J}$ PRS symbols, which is inserted at equal intervals in the time domain. FFT is performed on the $k$-th row of the ${\left( {{S_g}} \right)_{k,m}}$
\begin{equation}
\begin{array}{l}
v\left( d \right) = {\rm{FFT}}\left( {{S_{g,k}}(m)} \right) = \sum\limits_{m = 0}^{{M_J} - 1} {{e^{j2\pi \left( {K_{{\rm{comb}}}^{{\rm{PRS}}}m{T_s}} \right){f_{d,r}}}}} \\
 \times {e^{ - j2\pi dm/{M_J}}}\,\,\;\quad d \in \left\{ {0,1, \ldots {M_J} - 1} \right\},
\end{array}
\end{equation}
where ${\rm{FFT}}\left(  \cdot  \right)$ is a fast Fourier transform operation. $d$ is the index of the $d$-th result of $M_J$-point FFT. Recording the peak index of row $k$ as $in{d_{{S_g},k}}$, the estimation of Doppler frequency shift ${\hat f_{d,r}}$ is:
\begin{equation}
{\hat f_{d,r}} = \frac{{in{d_{{S_g},k}}}}{{M{T_s}}}.
\end{equation}

According to the relation between velocity $v$ and Doppler frequency shift ${\hat f_{d,r}}$
\begin{equation}
v = \frac{{c{f_{d,r}}}}{{2{f_c}}},
\end{equation}
where $f_c$ is the carrier frequency of the transmitted signal, the estimation of velocity $\hat v$ is
\begin{equation}
\hat v = \frac{{in{d_{{S_g},k}}c}}{{2M{T_s}{f_c}}}\quad in{d_{{S_g},k}} \in \left\{ {0,1, \ldots {M_J} - 1} \right\}.
\end{equation}

The velocity resolution $\Delta v$ is shown in (23).
\begin{equation}
\Delta v{\rm{ = }}\frac{c}{{2M{T_s}{f_c}}},
\end{equation}
and the maximum unambiguous velocity ${v_{\max }}$ is
\begin{equation}
{v_{\max }} = \frac{{{M_J}c}}{{2M{T_s}{f_c}}}{\rm{ = }}\frac{c}{{2K_{{\rm{comb}}}^{{\rm{PRS}}}{T_s}{f_c}}}.
\end{equation}

All rows are averaged to provide an estimate of velocity $\hat v$. The RMSE of the velocity estimation is
\begin{equation}
RMSE\left( v \right){\rm{ = }}\sqrt {\frac{1}{{{N_t}}}\sum\limits_{i = 1}^{N_t} {{{\left( {\hat v\left[ i \right] - v} \right)}^2}} }.
\end{equation}

The method of increasing the number of FFT points can also be adopted to improve the accuracy of velocity measurement, thereby reducing the number of PRS symbols and the overhead of the required time slot. The maximum unambiguous velocity does not change with the increase of the FFT points.

However, the increase of FFT points will bring computational complexity. And the velocity measurement using single frame will consume too many time slots within a frame. Thus, the velocity measurement using multiple frames is proposed to reduce the overhead in each frame. The details are shown in Fig. \ref{fig_multiframe}.
\begin{figure}[!t]
\centering
\includegraphics[width=0.6\textwidth]{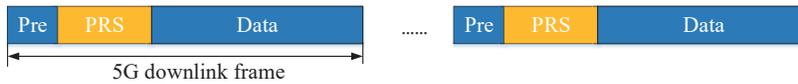}
\caption{Velocity measurement using multiple frames (Pre is the abbreviation of preamble sequence)} \label{fig_multiframe}
\end{figure}

${N_f}$ 5G downlink frames are continuously sent. The time domain resources occupied by PRS in the $i$-th downlink frame contain ${S_i}{\kern 1pt} {\kern 1pt} (i{\rm{ = }}1,2 \ldots {N_f})$ symbols. The receiver extracts the PRS symbols in each frame for velocity estimation
\begin{equation}
{\rm{FFT}}\left( {{S_{g,k}}(m)} \right) = \sum\limits_{m = 0}^{{S_J} - 1} {{e^{j2\pi \left( {K_{{\rm{comb}}}^{{\rm{PRS}}}m{T_s}} \right){f_{d,r}}}}}  \times {e^{ - j2\pi dm/{S_J}}},
\end{equation}
where ${S_J}{\rm{ = }}\sum\limits_{i = 1}^{{N_f}} {{S_i}} /K_{{\rm{comb}}}^{{\rm{PRS}}}$ is the PRS symbols inserted at equal intervals in the time domain which are extracted from $N_f$ frames.

The estimation of velocity ${\hat v_{{\rm{multi}}}}$ is
\begin{equation}
\begin{array}{l}
{{\hat v_{{\rm{multi}}}}} = \frac{{ind_{{S_g},k}^{{\rm{multi}}}c}}{{2{T_s}{f_c}(\sum\limits_{i = 1}^{{N_f}} {{S_i}})}}\\
ind_{{S_g},k}^{{\rm{multi}}} \in \{ 0,1, \ldots {S_J} - 1\}
\end{array},
\end{equation}
where $ind_{{S_g},k}^{{\rm{multi}}}$ is the peak index of velocity measurement using multiple frames.

The velocity resolution $\Delta {v_{{\rm{multi}}}}$ is derived as
\begin{equation}
\Delta {v_{{\rm{multi}}}} = \frac{c}{{2{T_s}{f_c}(\sum\limits_{i = 1}^{{N_f}} {{S_i}})}}.
\end{equation}

PRS symbol overhead ${\eta _i}$ in the $i$-th frame is defined as
\begin{equation}
{\eta _i}{\rm{ = }}\frac{{{S_i}}}{N_{{\rm{slot}}}^{{\rm{frame,}}\mu } \cdot {N_{{\rm{symb}}}^{{\rm{slot}}}}}.
\end{equation}

In addition, we define the sensing refresh time $\rho$ to represent the time to achieve a velocity measurement
\begin{equation}
\rho {\rm{ = }}{N_f} \cdot {T_f},
\end{equation}
where ${T_f} = 10\;{\rm{ms}}$ is the duration of per frame.

For the velocity measurement using multiple frames, the total number of PRS symbols used for sensing is increased compared with a single frame, thereby improving the velocity resolution $\Delta {v_{{\rm{multi}}}}$. Thus, the accuracy of velocity estimation is improved. Besides, the overhead within a frame ${\eta _i}$ is reduced, with the cost of increasing the sensing refresh time $\rho $ which indicates the time to achieve a velocity measurement becomes longer.

Referring to the scenario of JSC roadside BS detecing vehicles shown in Fig. 1, the long sensing refresh time means that the vehicle has to travel a long range to achieve a velocity measurement, which may cause the vehicle velocity to change during this period, namely, the velocity estimation is not updated in time. Therefore, there is a performance tradeoff among the velocity resolution $\Delta {v_{{\rm{multi}}}}$, overhead in the $i$-th frame ${\eta _i}$ and sensing refresh time $\rho$.

In addition, PRS is sent in the 5G downlink frame for sensing, so that sensing and communication functions are time-division multiplexed. The impact of velocity measurement using multiple frames on communication performance is mainly reflected in the symbol overhead. Assuming that ${M_d}$ data symbols are transmitted in a single frame and $P$-QAM modulation is adopted, the maximum communication data rate ${R_d}$ of a single frame is
\begin{equation}
{R_d}{\rm{ = }}{M_d} \times N \times {\log _2}P/{T_f},
\end{equation}

The reduction of PRS symbol overhead ${\eta _i}$ indicates that the amount of resources used for communication in a frame increases, which improves the communication rate.
\section{CRLB of Radar Sensing and Positioning}
In this section, we derive the CRLBs of  PRS based radar sensing and positioning.
\subsection{CRLB of PRS based Radar Sensing}
\begin{Theo}\label{th_e_f}
When $M \gg 1,N \gg 1$, the CRLB of the range and velocity estimated by the PRS based radar sensing is
\begin{equation}
CRLB\left( R_r \right){\rm{ = }}\frac{{{c^2}{T^2}}}{{{\xi ^2}SNR{{\left( {2\pi } \right)}^2}}}\frac{{12}}{{MN\left( {{N_J} - 1} \right)\left( {7{N_J} + 1} \right)}},
\end{equation}
\begin{equation}
CRLB\left( v \right){\rm{ = }}\frac{{{c^2}}}{{{\xi ^2}SNR{{\left( {2\pi } \right)}^2}f_c^2T_s^2}}\frac{{12}}{{NM\left( {{M_J} - 1} \right)\left( {7{M_J} + 1} \right)}},
\end{equation}
where $SNR$ is signal-to-noise ratio.
\end{Theo}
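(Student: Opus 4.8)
The plan is to treat the post-processing matrix $(S_g)_{k,m}$ of equation~(7) as the observation vector and to build the standard deterministic-signal Cram\'er--Rao machinery on top of it. First I would write the model $(S_g)_{k,m}=\xi\,\bar k_r(k)\,\bar k_d(m)+n_{k,m}$, where the $n_{k,m}$ are i.i.d.\ circularly-symmetric complex Gaussian samples of variance $\sigma^2=1/SNR$ (unit-power PRS symbols, so that $\xi^2/\sigma^2=\xi^2\,SNR$ is the received SNR and the prefactor $1/(\xi^2 SNR)$ emerges), and $\xi$ is taken as a \emph{known} constant, exactly as assumed below~(6). The clean mean $\mu_{k,m}=\xi\,\bar k_r(k)\bar k_d(m)$ is a separable two-dimensional complex exponential whose phase is linear in $k$ with slope proportional to $R_r$ and linear in $m$ with slope proportional to $v$ (through $f_{d,r}=2f_cv/c$). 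Hence the only unknowns form the two-vector $\theta=(R_r,v)^{\top}$, and the problem reduces to forming and inverting a $2\times2$ Fisher information matrix $\mathbf I$.

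Next I would apply the Gaussian Fisher-information identity $[\mathbf I(\theta)]_{ij}=\frac{2}{\sigma^2}\,\mathrm{Re}\sum_{k,m}\partial_{\theta_i}\mu_{k,m}^{*}\,\partial_{\theta_j}\mu_{k,m}$. Differentiating the exponent pulls down a factor proportional to $k$ for $\partial_{R_r}$ and proportional to $m$ for $\partial_{v}$; since $|\mu_{k,m}|^2=\xi^2$ is constant, every entry collapses to an index moment. Writing $C_R=\frac{4\pi K_{\rm comb}^{\rm PRS}\Delta f}{c}$ and $C_v=\frac{4\pi K_{\rm comb}^{\rm PRS}T_s f_c}{c}$, the diagonal entries are proportional to $C_R^2\,M_J\sum_k k^2$ and $C_v^2\,N_J\sum_m m^2$, while the crucial off-diagonal entry is proportional to $C_RC_v\big(\sum_k k\big)\big(\sum_m m\big)$. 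This cross term is nonzero precisely because the IFFT/FFT fix the phase origin at the corner $k=m=0$ rather than at the block centre, so range and velocity remain statistically coupled. I would then evaluate every moment in closed form via $\sum_{k=0}^{N_J-1}k=\frac{N_J(N_J-1)}{2}$ and $\sum_{k=0}^{N_J-1}k^2=\frac{(N_J-1)N_J(2N_J-1)}{6}$, and analogously over the $M_J$ symbol indices.

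The last step is the $2\times2$ inversion, $CRLB(R_r)=[\mathbf I^{-1}]_{11}=[\mathbf I]_{22}/\det\mathbf I$. The determinant is where the characteristic factor is born: up to the constants $C_R^2C_v^2$ it equals $M_J N_J\big(\sum_k k^2\big)\big(\sum_m m^2\big)-\big(\sum_k k\big)^2\big(\sum_m m\big)^2$, i.e.\ the product of the two (count-scaled) second moments minus the product of the two squared first moments. Asymptotically this difference behaves like $\frac{1}{9}-\frac{1}{16}=\frac{7}{144}$ of the naive leading term, which is exactly where the ``$7$'' in $(7N_J+1)$ originates, while the next order supplies the ``$+1$''; explicitly the bracket reduces to $7N_JM_J+N_J+M_J-5$. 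Invoking $M\gg1,N\gg1$ to discard the lower-order remainder, and then substituting $T=1/\Delta f$, $N=K_{\rm comb}^{\rm PRS}N_J$, $M=K_{\rm comb}^{\rm PRS}M_J$ and the $v$--$f_{d,r}$ relation, I would collect the constants into the stated form. The velocity bound then follows by the symmetry $k\leftrightarrow m,\ N_J\leftrightarrow M_J$, which is why it carries $(M_J-1)(7M_J+1)$.

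The main obstacle is handling the off-diagonal coupling correctly, not the routine moment sums: if one naively assumed $\mathbf I$ to be diagonal, the determinant would reduce to the textbook single-tone result proportional to $N_J(N_J^2-1)$ and the factor $(7N_J+1)$ would never appear. The delicate point is therefore to retain the cross term through the inversion and then take an \emph{asymmetric} large-size limit---large in the complementary dimension---so that $7N_JM_J+N_J+M_J-5$ collapses to $M_J(7N_J+1)$ for the range bound (and to $N_J(7M_J+1)$ for the velocity bound), which is exactly why the hypothesis demands both $M\gg1$ and $N\gg1$. Matching the remaining numerical constant and the $MN$ prefactor against the precise definition of $SNR$ and the comb-size conversions $N=K_{\rm comb}^{\rm PRS}N_J$, $M=K_{\rm comb}^{\rm PRS}M_J$ is then bookkeeping.
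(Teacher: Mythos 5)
Your proposal follows essentially the same route as the paper's proof: model the phase-rotated (symbol-divided) observations as a known 2-D complex exponential in AWGN, form the $2\times 2$ Fisher information matrix in the delay/Doppler (equivalently range/velocity) pair, invert it keeping the nonzero cross term, apply $M\gg 1,\ N\gg 1$, and convert via $R_r=c\tau/2$ and $v=cf_{d,r}/(2f_c)$. In fact you supply the algebra the paper leaves implicit --- the index-moment sums and the determinant expansion $7N_JM_J+N_J+M_J-5$ that produces the $(7N_J+1)$ factor --- and these check out against the stated bounds.
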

\begin{proof}
For the traditional OFDM system, since the radar receiver has known modulation symbols and the noise obeys one-dimensional Gaussian distribution, the radar observations after phase-by-phase rotation \cite{TR:CR25} are expressed as
\begin{equation}
{z_{m,n}} = \xi {A_{m,n}}{e^{j2\pi K_{{\rm{comb}}}^{{\rm{PRS}}}m{T_s}{f_{d,r}}}}{e^{ - j2\pi nK_{{\rm{comb}}}^{{\rm{PRS}}}\Delta f\tau }} + {w_{m,n}},
\end{equation}
where ${A_{m,n}} = \left| {{x_{m,n}}} \right|$ is the amplitude of the transmitted symbol. ${w_{m,n}}$ is the additive white Gaussian noise (AWGN) with ${\sigma ^2}$ variance and zero mean.

The received signal with unknown parameter $\theta {\rm{ = }}\left( {\tau ,{f_{d,r}}} \right)$ is observed, then the likelihood function is
\begin{equation}
\begin{array}{l}
f\left( {\left. z \right|\tau ,{f_{d,r}}} \right) = \frac{1}{{{{\left( {2\pi {\sigma ^2}} \right)}^{\frac{{MN}}{2}}}}}\\
{e^{ - \frac{1}{{2{\sigma ^2}}}\sum\limits_m {\sum\limits_n {{{\left| {{z_{m,n}} - \xi {A_{m,n}}{e^{j2\pi K_{{\rm{comb}}}^{{\rm{PRS}}}m{T_s}{f_{d,r}}}}{e^{ - j2\pi nK_{{\rm{comb}}}^{{\rm{PRS}}}\Delta f\tau }}} \right|}^2}} } }}
\end{array}.
\end{equation}

The log likelihood function $L\left( {\left. z \right|\tau ,{f_{d,r}}} \right)$ is derived as
\begin{equation}
\begin{array}{l}
L\left( {\left. z \right|\tau ,{f_{d,r}}} \right){\rm{ = }}\ln f\left( {\left. z \right|\tau ,{f_{d,r}}} \right) =  - \frac{{MN}}{2}\ln \left( {2\pi {\sigma ^2}} \right)\\
 - \frac{1}{{2{\sigma ^2}}}\sum\limits_m {\sum\limits_n {{{\left| {{z_{m,n}} - \xi {A_{m,n}}{e^{j2\pi K_{{\rm{comb}}}^{{\rm{PRS}}}m{T_s}{f_{d,r}}}}{e^{ - j2\pi nK_{{\rm{comb}}}^{{\rm{PRS}}}\Delta f\tau }}} \right|}^2}} }
\end{array},
\end{equation}
with ${s_{m,n}} = \xi {A_{m,n}}{e^{j2\pi K_{{\rm{comb}}}^{{\rm{PRS}}}m{T_s}{f_{d,r}}}}{e^{ - j2\pi nK_{{\rm{comb}}}^{{\rm{PRS}}}\Delta f\tau }}$. (35) can be simplified as
\begin{equation}
L\left( {\left. z \right|\tau ,{f_{d,r}}} \right){\rm{ = }} - \frac{{MN}}{2}\ln \left( {2\pi {\sigma ^2}} \right) - \frac{1}{{2{\sigma ^2}}}\sum\limits_m {\sum\limits_n {{{\left| {{z_{m,n}} - {s_{m,n}}} \right|}^2}} }.
\end{equation}

Further, we can obtain the second-order Fisher information matrix $F$ as follows.
\begin{equation}
\begin{array}{l}
F{\rm{ = }}\left[ {\begin{array}{*{20}{c}}
{{F_{\tau \tau }}}&{{F_{\tau {f_{d,r}}}}}\\
{{F_{{f_{d,r}}\tau }}}&{{F_{{f_{d,r}}{f_{d,r}}}}}
\end{array}} \right]\\
 =  - \left[ {\begin{array}{*{20}{c}}
{E\left( {\frac{{{\partial ^2}L}}{{\partial {\tau ^2}}}} \right)}&{E\left( {\frac{{{\partial ^2}L}}{{\partial \tau \partial {f_{d,r}}}}} \right)}\\
{E\left( {\frac{{{\partial ^2}L}}{{\partial {f_{d,r}}\partial \tau }}} \right)}&{E\left( {\frac{{{\partial ^2}L}}{{\partial {f_{d,r}}^2}}} \right)}
\end{array}} \right]\\
{\rm{ = }}\frac{1}{{{\sigma ^2}}}{\left( {\sum\limits_m {\sum\limits_n {\left[ {\frac{{\partial {s_{m,n}}}}{{\partial {\theta _i}}} \cdot \frac{{\partial s_{m,n}^*}}{{\partial {\theta _j}}}} \right]} } } \right)_{ij}}.
\end{array}
\end{equation}

The CRLB matrix for time delay and Doppler shift estimation is the inverse of the Fisher information matrix as follows.
\begin{equation}
\left[ {\begin{array}{*{20}{c}}
{CRLB\left( \tau  \right)}&{CRLB\left( {\tau ,{f_{d,r}}} \right)}\\
{CRLB\left( {{f_{d,r}},\tau } \right)}&{CRLB\left( {{f_{d,r}}} \right)}
\end{array}} \right]{\rm{ = }}{F^{ - 1}}.
\end{equation}

Assuming $M \gg 1,N \gg 1$ , the CRLB of range and velocity can be obtained.
\begin{equation}
{\begin{array}{*{20}{l}}
{CRLB\left( \tau  \right){\rm{ = }}\frac{{{F_{{f_{d,r}}{f_{d,r}}}}}}{{{F_{\tau \tau }}{F_{{f_{d,r}}{f_{d,r}}}} - {F_{\tau {f_{d,r}}}}{F_{{f_{d,r}}\tau }}}}}\\
{ = \frac{{{T^2}}}{{{\xi ^2}SNR \cdot {{\left( {2\pi } \right)}^2}}} \cdot {\mkern 1mu} \frac{{48}}{{MN\left( {{N_J} - 1} \right)\left( {7{N_J} + 1} \right)}}}
\end{array}},
\end{equation}
\begin{equation}
\begin{array}{*{20}{l}}
{CRLB\left( {{f_{d,r}}} \right){\rm{ = }}\frac{{{F_{\tau \tau }}}}{{{F_{\tau \tau }}{F_{{f_{d,r}}{f_{d,r}}}} - {F_{\tau {f_{d,r}}}}{F_{{f_{d,r}}\tau }}}}}\\
{{\rm{ = }}\frac{1}{{{\xi ^2}SNR \cdot {{\left( {2\pi } \right)}^2} \cdot T_s^2}} \cdot \frac{{48}}{{NM\left( {{M_J} - 1} \right)\left( {7{M_J} + 1} \right)}}}
\end{array}.
\end{equation}

According to the relation ${R_r} = c\tau /2$ and (21), (31) and (32) can be derived based on the following relation
\begin{equation}
CRLB\left( R_r \right) = \frac{{{c^2}}}{4}CRLB\left( \tau  \right),
\end{equation}
\begin{equation}
CRLB\left( v \right) = \frac{{{c^2}}}{{4f_c^2}}CRLB\left( {{f_{d,r}}} \right).
\end{equation}
\end{proof}

(32-33) show that with the decreasing of $T$, the CRLB of range estimation decreases, while the CRLB of velocity estimation increases. There is a performance tradeoff between range and velocity estimation. Therefore, the parameters can be reasonably configured to satisfy the accuracy requirements for range and velocity estimation in actual radar sensing.
\subsection{CRLB of PRS based Positioning}
\begin{Theo}\label{th_e_s}
The CRLB for PRS positioning is
\begin{equation}
CRLB\left( R_r \right){\rm{ = }}\frac{{{c^2}T^2}}{{\frac{{4{\pi ^2}}}{3} \cdot SNR \cdot N\left( {{N_J} - 1} \right)\left( {2{N_J} - 1} \right)}}.
\end{equation}
\end{Theo}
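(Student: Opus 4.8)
The plan is to reuse the maximum-likelihood / Fisher-information machinery developed for Theorem~\ref{th_e_f}, but specialized to the positioning problem, in which the transmission is one-way (downlink, BS to UE) and the only unknown is the propagation delay $\tau$. The decisive structural simplification is that positioning estimates a \emph{single} parameter rather than the pair $(\tau,f_{d,r})$, so the Fisher information collapses to the scalar $F_{\tau\tau}$ and no $2\times 2$ inversion is required. This is precisely what makes the positioning bound cleaner than the sensing bound (32): there the $(7N_J+1)$ factor arose from inverting the coupled $\tau$--$f_{d,r}$ information matrix, whereas here a pure $\sum n^{2}$ term survives.

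First I would write the one-way received model in the phase-rotated form of (34), namely $z_{m,n}=\xi A_{m,n}e^{-j2\pi n K_{{\rm{comb}}}^{{\rm{PRS}}}\Delta f\tau}+w_{m,n}$, now with $\tau$ the sole unknown, and form the Gaussian log-likelihood exactly as in (35)--(37). Differentiating $s_{m,n}$ with respect to $\tau$ brings down a factor $-j2\pi n K_{{\rm{comb}}}^{{\rm{PRS}}}\Delta f$, giving $F_{\tau\tau}=\frac{1}{\sigma^{2}}\sum_{m}\sum_{n}\xi^{2}A_{m,n}^{2}\bigl(2\pi K_{{\rm{comb}}}^{{\rm{PRS}}}\Delta f\bigr)^{2}n^{2}$. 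The only non-trivial summation is $\sum_{n=0}^{N_J-1}n^{2}=\frac{(N_J-1)N_J(2N_J-1)}{6}$, and this closed form is exactly what produces the $(N_J-1)(2N_J-1)$ factor in the statement.

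Next I would invert the scalar information, $CRLB(\tau)=1/F_{\tau\tau}$, and pass from delay to range. Because positioning is one-way, the range--delay relation is $R_r=c\tau$ (rather than $R_r=c\tau/2$ as in sensing), so $CRLB(R_r)=c^{2}\,CRLB(\tau)$. Substituting $\Delta f=1/T$ and $K_{{\rm{comb}}}^{{\rm{PRS}}}=N/N_J$ collapses $K_{{\rm{comb}}}^{{\rm{PRS}}}\Delta f$ to $N/(N_J T)$; assembling the remaining constants --- the $(2\pi)^{2}$ from the derivative, the $\frac{1}{6}$ from $\sum n^{2}$, and the $SNR$ normalization --- then delivers the coefficient $\frac{4\pi^{2}}{3}$ and the denominator $N(N_J-1)(2N_J-1)$.

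I expect the main obstacle to be the constant/normalization bookkeeping rather than any conceptual step. Specifically, the $SNR$ in the statement must be pinned down (the per-resource-element amplitude, the attenuation $\xi$, and the coherent accumulation over the $M$ symbols and $N_J$ subcarriers) so that $\xi$, $M$ and the comb factor cancel into the single $SNR$ that appears, and so that the $\frac{1}{6}$ of $\sum n^{2}$, once the $SNR$ convention consistent with Theorem~\ref{th_e_f} is applied, sharpens into the $\frac{1}{3}$ inside $\frac{4\pi^{2}}{3}$. Once that convention is fixed, the remaining algebra is routine.
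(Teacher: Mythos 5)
Your route differs from the paper's in a meaningful way: the paper does not carry out the Fisher-information computation at all. It writes the one-symbol OFDM baseband signal $x[n]=\sqrt{C/N}\sum_{l\in J}p_l s_l e^{j2\pi nl/N}$, fixes the per-subcarrier power weights $p_k=\sqrt{N/N_J}=\sqrt{K_{{\rm{comb}}}^{{\rm{PRS}}}}$ under the constraint $\sum_k p_k^2=N$, and then simply invokes the general delay-estimation CRLB of Kay \cite{TR:CR26} together with the LTE-PRS derivation of \cite{TR:CR27} to state $CRLB(\tau)$, finally multiplying by $c^2$ to reach the theorem. Your plan to redo this as an explicit scalar Fisher-information calculation is sound and is essentially what those references do under the hood; your identification of the single-parameter structure (no $2\times 2$ inversion), of $\sum_{n=0}^{N_J-1}n^2=\tfrac{(N_J-1)N_J(2N_J-1)}{6}$ as the source of the $(N_J-1)(2N_J-1)$ factor, and of the one-way relation $R_r=c\tau$ (hence $c^2\,CRLB(\tau)$ rather than the radar factor $c^2/4$) all agree with the paper, the last point being left implicit there.

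The one substantive gap is exactly the step you label routine bookkeeping. If you carry $K_{{\rm{comb}}}^{{\rm{PRS}}}\Delta f=N/(N_J T)$ through $|\partial s_{m,n}/\partial\tau|^2$, sum $n^2$, and use a per-resource-element SNR as in Theorem~\ref{th_e_f}, the resulting denominator carries residual powers of $K_{{\rm{comb}}}^{{\rm{PRS}}}=N/N_J$ relative to the stated bound, so you do not land on the single factor $N(N_J-1)(2N_J-1)$. Those powers are cancelled only by the power-weight normalization $p_k^2=N/N_J$ with $\sum_k p_k^2=N$ (i.e., the total transmit power is concentrated on the $N_J$ active subcarriers) together with the SNR convention the paper imports from \cite{TR:CR26,TR:CR27}; your sketch never introduces the $p_k$ at all. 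So the skeleton of your argument is right and more self-contained than the paper's citation-based proof, but the normalization you defer is the only nontrivial content the paper actually supplies, and without it the constant $\tfrac{4\pi^2}{3}\cdot N$ cannot be recovered.
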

\begin{proof}
The BS generates the PRS according to the high-level parameter configuration, and maps the PRS to the physical resource unit, and finally obtains the OFDM baseband signal $x\left[ n \right]$ using one symbol through IFFT.
\begin{equation}
\begin{array}{l}
x\left[ n \right] = \sqrt {\frac{C}{N}} \sum\limits_{l \in J} {{p_l} \cdot {s_l} \cdot {e^{\frac{{j2\pi nl}}{N}}}} \\
{\rm{ = }}\sqrt {\frac{C}{N}} {e^{\frac{{j2\pi n{k_0}}}{N}}}\sum\limits_{k = 0}^{{N_J} - 1} {{p_k} \cdot {s_k} \cdot \,} {e^{\frac{{j2\pi nk}}{{{N_J}}}}}
\end{array},
\end{equation}
where $C$ is the signal power, and ${s_k}$ is the PRS. $k_0$ is the index of the first subcarrier carrying PRS given in (4). $p_k^2$ is the relative power weight of subcarrier $k$ with $\sum\nolimits_k {p_k^2}  = N$. Assuming that the power is evenly distributed on all subcarriers carrying PRS, the relative power weight of subcarrier $k$ is
\begin{equation}
{p_k} = \sqrt {\frac{N}{{{N_J}}}} {\rm{ = }}\sqrt {K_{{\rm{comb}}}^{{\rm{PRS}}}} ,\quad \;k \in J.
\end{equation}

Therefore, according to general CRLB derivation results of signals in gaussian white noise given by Kay \cite{TR:CR26} and the derivation in \cite{TR:CR27}, the CRLB estimation of the delay for PRS positioning is
\begin{equation}
CRLB\left( \tau  \right){\rm{ = }}\frac{{T^2}}{{\frac{{4{\pi ^2}}}{3} \cdot SNR \cdot N\left( {{N_J} - 1} \right)\left( {2{N_J} - 1} \right)}}.
\end{equation}

Thus, the range estimation of CRLB in (44) can be obtained.
\end{proof}
\section{Simulations and Numerical results}
In this section, we first provide simulation results of radar signal processing method to verify the feasibility of PRS based radar sensing in range and velocity estimation. Then, the performance of range and velocity estimation is provided and compared with CRLB. In addition, we give some suggestions for the frame structure design for 5G-A and 6G.
\subsection{Range Estimation}
For the OFDM waveform modulated by PRS sequence, one time slot accounts for up to 12 symbols. The $SNR$ is 5 dB. The rest of simulation parameters are shown in Table V.
\begin{table}[htbp]
    \renewcommand\arraystretch{1.5}
    \caption{simulation parameters \textsuperscript{\cite{TR:CR22}}}
    \begin{center}
        \begin{tabular}{|m{2cm}<{\centering}|m{3cm}<{\centering}|m{2cm}<{\centering}|}
        \hline
        Symbol&Parameter&Value\\
        \hline
        ${f_c}$&Carrier frequency&$24\;{\rm{GHz}}$\\
        \hline
        $\Delta f$&Frequency spacing of subcarriers&$120\;{\rm{kHz}}$\\
        \hline
        $T_s$&Total duration of OFDM symbol&$8.92\;\mu s$\\
        \hline
        $N$&Number of subcarriers&256\\
        \hline
        $K_{{\rm{comb}}}^{{\rm{PRS}}}$&Comb size of PRS&4/2\\
        \hline
        $\begin{array}{l}\;\quad\;\;{N_J}\;\\(N/K_{{\rm{comb}}}^{{\rm{PRS}}})\end{array}$&Number of subcarriers carrying PRS&64/128\\
        \hline
        ${R_r}$&Range of target&$50\;{\rm{m}}$\\
        \hline
        $v$&Velocity of target&$15\;{\rm{m}}/{\rm{s}}$\\
        \hline

        \end{tabular}
    \end{center}
\end{table}

A measurement result is shown in Fig. \ref{fig_rangepeak}, where the theoretical value of range resolution is $\Delta R = \frac{c}{{2N\Delta f}} = 4.88\;{\rm{m}}$. Since the peak index of column IFFT in Fig. \ref{fig_rangepeak} is 11 and the count starts from 1 in the simulation, the value of $in{d_{{S_g},m}}$ is 10. The estimated range is
\begin{figure}[!t]
\centering
\includegraphics[width=0.45\textwidth]{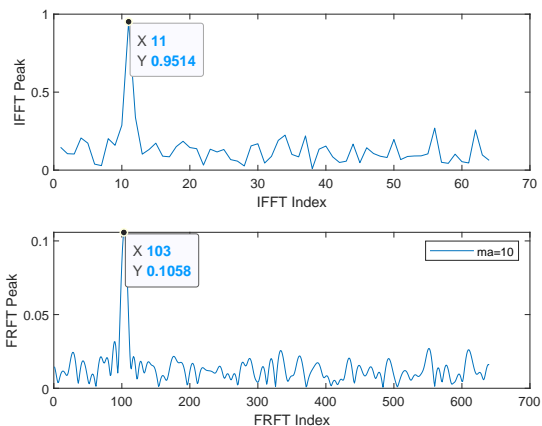}
\caption{Simulated IFFT correlation peak when the target is at 50 m} \label{fig_rangepeak}
\end{figure}
\begin{equation}
{\hat R_r} = \frac{{in{d_{{S_g},m}}c}}{{2N\Delta f}} = 48.83\;{\rm{m}}.
\end{equation}

With the fractional factor ${m_a} = 10$, the value of $ind_{{S_g},m}^{{m_a}}$ is 102 and the estimated range is
\begin{equation}
{\hat R_{r,{m_a}}} = \frac{{ind_{{S_g},m}^{{m_a}}c}}{{2 \times {m_a} \times N\Delta f}} = 49.80\;{\rm{m}}.
\end{equation}

With 1000 times Monte Carlo simulations, the RMSE of range estimation is 1.17 m. When introducing the fractional factor, the RMSE is reduced to 0.33 m. From the results of range estimation, the ranging accuracy is significantly improved and reached the centimeter level.

\begin{figure}[!t]
\centering
\includegraphics[width=0.45\textwidth]{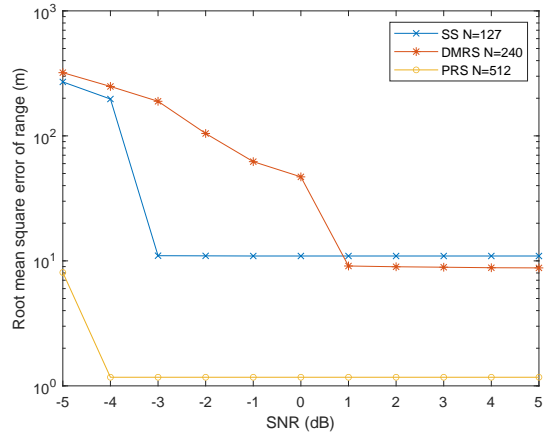}
\caption{Comparison of RMSEs of range estimations between SS and PRS} \label{fig_Compare}
\end{figure}

Since the CSI-RS occupies very few REs in one PRB due to the multiplexing of different antenna ports, we only compare the RMSE of range estimation among SS, DMRS and PRS to illustrate the feasibility of PRS for radar sensing, as shown in Fig. \ref{fig_Compare}. The ranging accuracy of the radar signal processing algorithm in this paper is limited by the number of subcarriers. Therefore, the comparison of the sensing performance of different reference signals essentially lies in their different time-frequency resources. The comb of PBCH DMRS in the frequency domain, occupies up to 240 consecutive subcarriers. For a reasonable comparison, the length of PRS sequence is set the same as the synchronization sequence and the comb size is 4 in Fig. \ref{fig_Compare}. Since the frequency domain is comb-shaped, PRS occupies more frequency resources than SS. Hence, the ranging performance of PRS is significantly better than that of SS. Although both PRS and DMRS are mapped with Comb 4 in the frequency domain, the PRS sequence is longer, so the ranging accuracy is better. The result shows that PRS is suitable for radar sensing.

\begin{figure}[!t]
\centering
\includegraphics[width=0.45\textwidth]{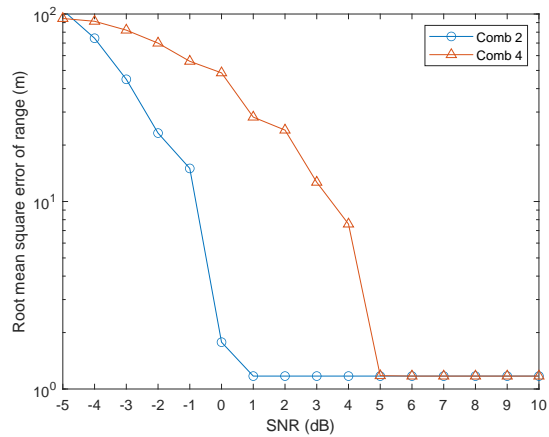}
\caption{Comparison of RMSEs of range estimations between Comb 2 and Comb 4} \label{fig_comb}
\end{figure}

\begin{figure}[!t]
\centering
\includegraphics[width=0.45\textwidth]{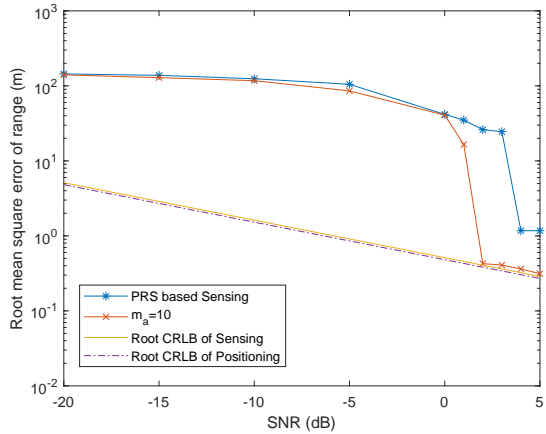}
\caption{RMSEs of range estimations vs SNR} \label{fig_Range}
\end{figure}

Fig. \ref{fig_comb} shows that the comb form of PRS also affects the ranging accuracy. It is worth mentioning that the RMSEs of range become unchangeable when SNR reaches a certain threshold in Fig. \ref{fig_Compare} and Fig. \ref{fig_comb}, it is comprehensible according to the estimated range given in (13). The (13) shows that since the peak index $in{d_{{S_g},m}}$ can only be taken as an integer, there will be a minimum error between the estimated range and the real range. With the increase of SNR, The RMSEs of range gradually converges to the minimum error and will not change, which is the best accuracy that can be achieved under high SNR caused by the resolution of algorithm. With the increase of SNR, the RMSE of Comb 2 converges faster. In the case of low SNR, the ranging accuracy of Comb 2 is better than that of Comb 4. The reason is that Comb 4 inserts fewer subcarriers so that it has a smaller peak value, which is more susceptible to noise. Comb 2 performs better in the links with low and medium SNRs, but its performance advantage gradually decreases with the improvement of SNR. In the scenarios with small delay spread, especially in the area of high SNR, Comb 4 with lower overhead has higher throughput performance.

Fig. \ref{fig_Range} shows that the RMSE of the range estimation and the root CRLB of the range estimation decrease with the increase of SNR, and the RMSE of the range estimation gradually approaches the root CRLB. When introducing the fractional factor $m_a$, the curve converges faster and approaches the CRLB. Fig. \ref{fig_ma} shows the impact of different values of $m_a$ on the improvement of RMSE. Considering the tradeoff between accuracy requirements and complexity, the appropriate value of $m_a$ is greater than 10.
\begin{figure}[!t]
\centering
\includegraphics[width=0.45\textwidth]{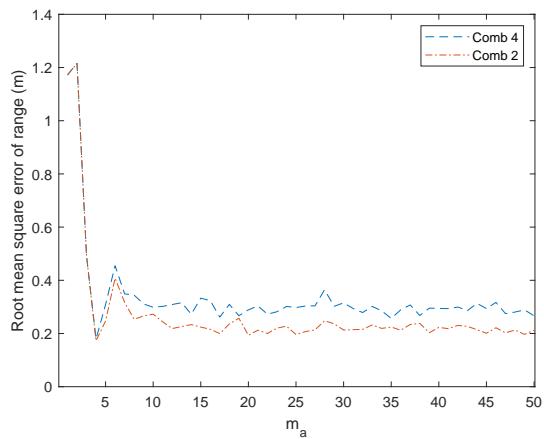}
\caption{RMSEs of range estimations vs $m_a$} \label{fig_ma}
\end{figure}
\subsection{Velocity Estimation}
Supposing that PRS is sent continuously, and the number of symbols is $M=128$. For one measurement result as shown in Fig. \ref{fig_velocitypeak}, the theoretical value of velocity resolution $\Delta v{\rm{ = }}\frac{c}{{2M{T_s}{f_c}}}{\rm{ = 5}}{\rm{.47\;m/s}}$ and the value of $in{d_{{S_g},k}}$ is 3. Thus, the estimated velocity is
\begin{figure}[!t]
\centering
\includegraphics[width=0.45\textwidth]{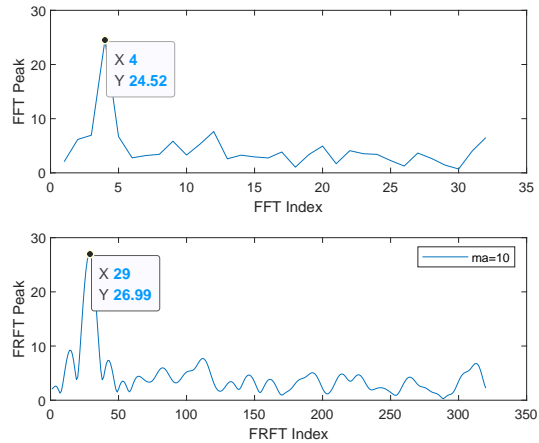}
\caption{Simulated FFT correlation peak when the target's velocity is 15 m/s} \label{fig_velocitypeak}
\end{figure}
\begin{equation}
\hat v = \frac{{in{d_{{S_g},k}}c}}{{2M{T_s}{f_c}}}{\rm{ = 16}}{\rm{.42\;m/s}}.
\end{equation}

Taking the fractional factor ${m_a} = 10$, the peak index $ind_{{S_g},k}^{{m_a}}$ for FFT after introducing ${m_a}$ is 28, the estimated velocity is
\begin{equation}
{\hat v_{{m_a}}} = \frac{{ind_{{S_g},k}^{{m_a}}c}}{{2{m_a} \times M{T_s}{f_c}}}{\rm{ = 15}}{\rm{.33\;m/s}}.
\end{equation}

After 1000 times Monte Carlo simulations, the RMSE of velocity estimation is 2.24 m/s. When introducing the fractional factor, the value of RMSE is reduced to 0.69 m/s. It is revealed that the introduction of fractional factor improves the accuracy of velocity estimation, which reaches the cm/s level.
\begin{figure}[!t]
\centering
\includegraphics[width=0.45\textwidth]{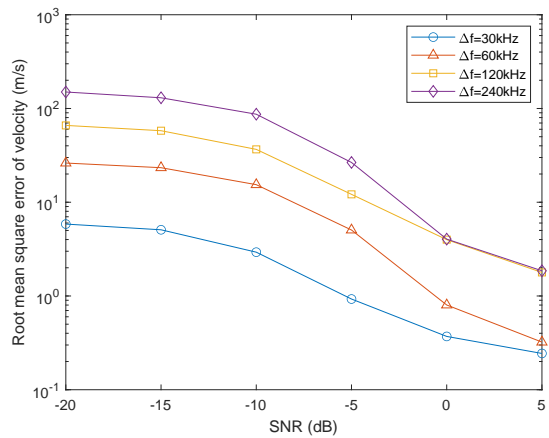}
\caption{Comparison of RMSEs of velocity estimations with different subcarrier intervals} \label{fig_Comparison}
\end{figure}

Fig. \ref{fig_Comparison} shows that the RMSE of velocity estimation decreases with the increase of SNR, and the subcarrier spacing also affects the accuracy of velocity estimation. The accuracy of velocity estimation is increasing with the decrease of the subcarrier spacing. The accuracy of velocity estimation is increasing with the decrease of $\Delta f$. Although the resolution at 120 kHz is 2 times higher than that at 240 kHz, the resolution is still very poor, and the improvement of accuracy is not obvious.
\begin{figure}[!t]
\centering
\includegraphics[width=0.45\textwidth]{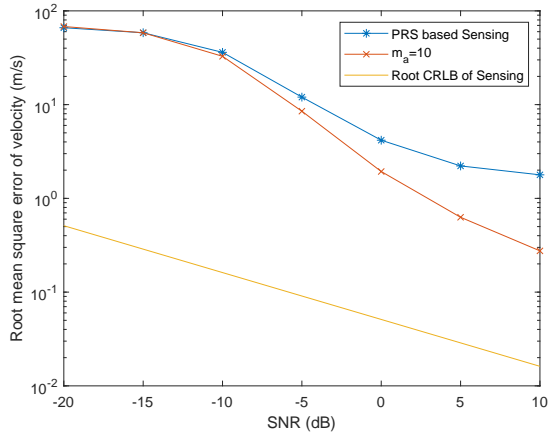}
\caption{RMSEs of velocity estimations vs SNR} \label{fig_Velocity}
\end{figure}

RMSEs of velocity estimations vs SNR are shown in Fig. \ref{fig_Velocity}. The RMSE and the root CRLB of the velocity estimation decrease as the SNR increases. With the introduction of the fractional factor, the RMSE curve of the velocity estimation converges faster.
\subsection{Velocity Measurement using Multiple Frames}
Taking $\Delta f = 60 {\rm{kHz}}$, then a 5G NR frame has 560 symbols. Assuming that the number of symbols occupied by PRS per frame $S_i$ is 128, and the rest symbols are filled with random data modulated by QPSK.

Fig. \ref{fig_overhead} shows the relationship among velocity resolution $\Delta {v_{{\rm{multi}}}}$, overhead in $i$-th frame ${\eta _i}$ and sensing refresh time $\rho$. The velocity resolution $\Delta {v_{{\rm{multi}}}}$ can be improved by increasing the PRS symbol overhead in a single frame ${\eta _i}$ or increasing the sensing refresh time $\rho$. To ensure high velocity resolution, ${\eta _i}$ is reduced by sacrificing the sensing refresh time $\rho$, namely, using more frames for velocity measurement.

Fig. \ref{fig_multi} shows the performance of velocity measurement using multiple frames with different SNRs. When the number of frames is the same, the RMSE of velocity measurement decreases with the increase of SNR, improving the anti-noise performance. With the increase of the number of frames, the accuracy of velocity estimation is improved. Since $S_i$ is 128, the PRS symbol overhead ${\eta _i}$ in a single frame is 22.9\%. In order to ensure the accuracy of velocity measurement, assuming that three frames are used for velocity measurement, the sensing refresh time $\rho$ is 0.03 s so that the vehicle travels 0.45 m during velocity estimation with the velocity of 54 km/h, which is tolerable.
\begin{figure}[!t]
\centering
\includegraphics[width=0.45\textwidth]{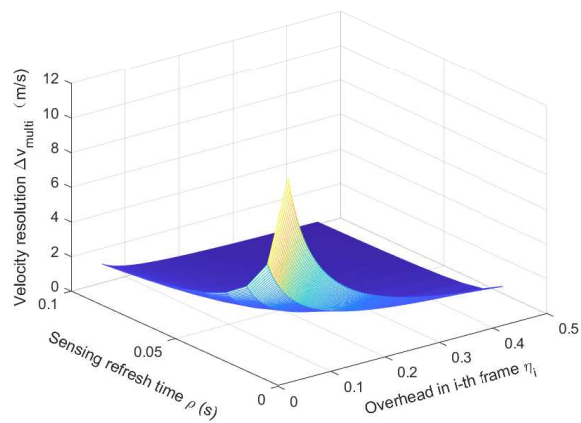}
\caption{Relationship among velocity resolution $\Delta {v_{multi}}$, overhead in $i$-th frame ${\eta _i}$ and sensing refresh time $\rho$} \label{fig_overhead}
\end{figure}
\begin{figure}[!t]
\centering
\includegraphics[width=0.45\textwidth]{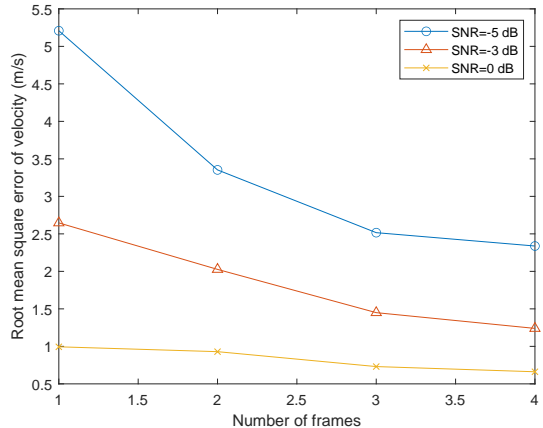}
\caption{RMSEs of velocity estimations vs number of frames} \label{fig_multi}
\end{figure}

\subsection{Inspiration of Frame Structure Design for 5G-A and 6G}
The analysis of PRS for radar sensing shows that it is feasible for radar sensing and positioning. We provide some suggestions for the frame structure design of 5G-A and 6G in the future.
\begin{figure*}[!t]
\centering
\includegraphics[width=1.1\textwidth]{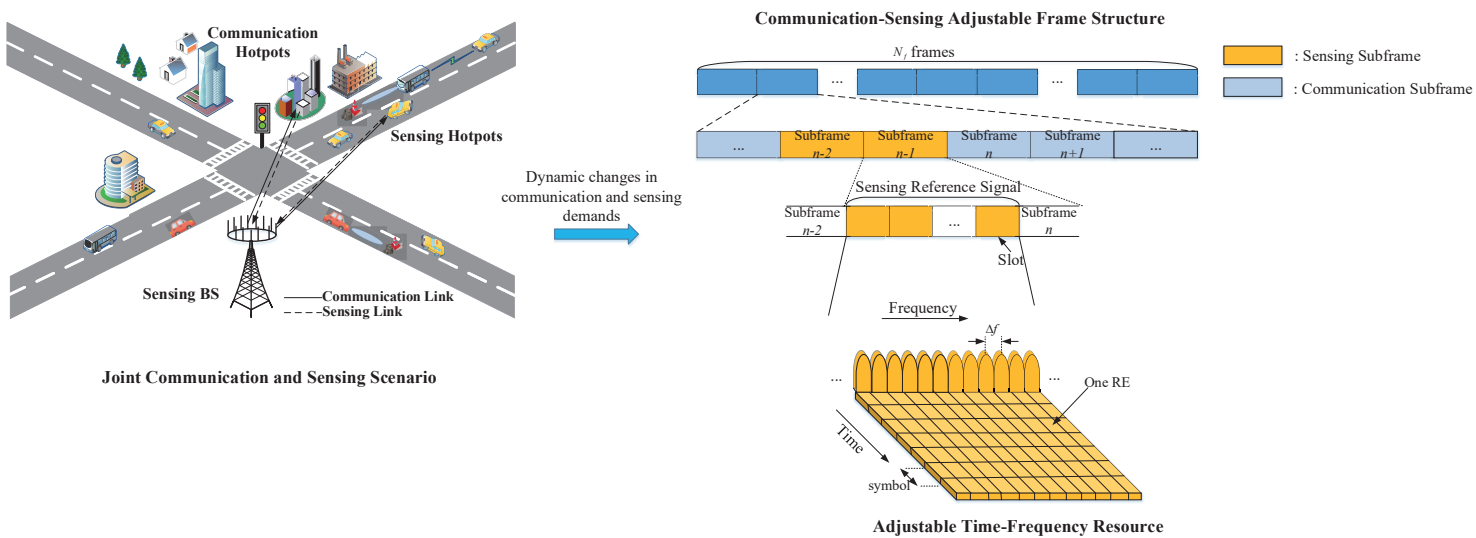}
\caption{Frame structure design for 5G-A and 6G}\label{fig_RS}
\end{figure*}

As shown in Fig. \ref{fig_RS}, we divide the frame structure into sensing subframe and communication subframe. The sensing subframe mainly realizes the function of radar sensing, and the communication subframe is mainly used for data transmission. It can also be used for radar sensing when the demand for sensing performance is high. Braun $et\;al.$ \cite{TR:CR34} used the data of one OFDM frame for single target sensing. For massive downlink data transmission in large bandwidth, data signal can be used for sensing instead of sensing reference signal while serving communication users. However, when a single user transmits with partial bandwidth, the amount of data is small, so that the performance of sensing using data signal is poor \cite{TR:CR29}. Wild $et\;al.$ \cite{TR:CR18} pointed out that although data symbols can be used for sensing, additional radar symbols are still required because there may be no data to transmit at a given time and direction where sensing needs to be performed. Barneto $et\;al.$ \cite{TR:CR30} indicated that the communication transmitter is not always full-load, and there are usually unused subcarriers, which can be used by the radar subcarriers to improve the performance of radar sensing. Therefore, it is necessary to set the sensing reference signal specially used for radar sensing in the sensing subframe. Moreover, the power of pilot signal is high which has good anti-noise performance. In the scenario with high demand for sensing performance, the sensing reference signal and data signal are combined to realize long-distance sensing \cite{TR:CR31}.

In order to realize adaptive communication and sensing performance, the design of sensing reference signal will support waveform reconfiguration to meet the requirements of sensing performance in different scenarios. In the scenario with low sensing performance requirements, the sensing reference signal can use the reference signal of the existing communication system, such as PRS, or redesign the sensing subframe by adjusting the time-frequency resources. Comb pattern is used to orthogonalize the frequency-domain allocation of sensing reference signals of users, which can not only meet the sensing performance required by diversified scenarios, but also further suppress the interference between BSs.

In the sensing subframe, the time-frequency resource allocation of the sensing reference signal is mainly determined by the sensing performance requirements. (32) and (33) show that there is a performance trade-off between range and velocity estimation. Fig. \ref{fig_CRLB} shows the relationship between the CRLB of range and velocity estimation under different SNRs when the subcarrier spacing configuration $\mu$ is configured as 0, 1, 2, 3, 4. As the SNR increases, the CRLB of range and velocity estimation decreases, which indicates that a better lower bound of accuracy can be reached. The parameters such as $\Delta f$, $N$, $M$ and ${f_c}$ can be reasonably configured to satisfy the accuracy requirements for range and velocity estimation. In addition, the time-frequency resource mapping of the sensing reference signal can be flexibly adjusted to cope with different sensing scenarios. Take PRS for example, the different comb structures of PRS can prevent the interference of the PRS signals sent by multiple BSs.
\begin{figure}[!t]
\centering
\includegraphics[width=0.45\textwidth]{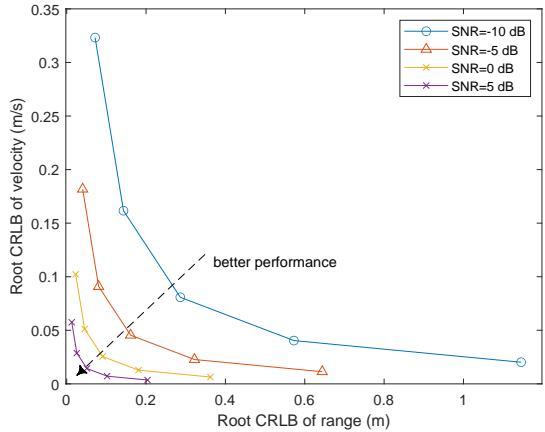}
\caption{The relationship between CRLB of range and velocity estimation} \label{fig_CRLB}
\end{figure}

In addition, in order to enhance the sensing performance, chirp \cite{TR:CR32,TR:CR33} and other frequency modulation schemes can be combined with OFDM to redesign the sensing reference signal. By increasing the time bandwidth product of the signal, the signal has higher resolution and accuracy.
\section{Conclusion}
In this paper, we demonstrate the 5G PRS, as a pilot signal in 5G mobile communication system, has the feasibility and superiority in radar sensing compared with other pilot signals. We investigate the method of using 5G PRS for radar sensing to realize joint sensing, communication and positioning. Therefore, the PRS can be regarded as a sensing reference signal which requires no additional changes to the 5G signal, so that PRS for sensing can be well compatible with mobile communication system. We analyze the range and velocity estimation performance of PRS for radar sensing and derive the corresponding CRLB. Meanwhile, we also propose an enhanced method to improve the accuracy of velocity estimation using multiple frames. Simulation results show that the RMSE of the range and velocity estimation decrease with the increase of SNR. At the end of the paper, we also give some suggestions for the future 5G-A and 6G frame structure design, hoping to provide some inspirations for future frame structure design work.

\section*{Acknowledgment}
The authors appreciate Future Research Lab at China Mobile Research Institute for the support to this research.

\end{document}